\documentclass{article}
\usepackage{graphicx}
\usepackage{amsmath}
\usepackage{amssymb}
\usepackage{amsthm}
\usepackage{subcaption}
\usepackage{authblk}

\newtheorem{proposition}{Proposition}

\title{Exact learning of quantum noise with tensor networks}

\author[1]{Nicola Pancotti}
\author[1]{Vedika Saravanan}
\author[1]{Krysta Svore}
\affil[1]{NVIDIA Corporation, 2788 San Tomas Expressway, Santa Clara, 95051, CA, USA}

\begin{document}

\maketitle

\begin{abstract}
Accurate noise models are essential for high-performance quantum error correction, yet characterizing the noise of a quantum device typically requires dedicated experiments.
We present a variational framework that learns the noise model directly from quantum-error-correction syndrome and logical-observable data collected during error-corrected memory experiments.
The fault-event probabilities are treated as variational parameters and optimized via gradient descent to minimize the binary cross-entropy between the decoder's predictions and experimental logical-observable outcomes.
We prove that this objective is principled rather than ad hoc: a sufficiently expressive noise ansatz attains the information-theoretic minimum logical error rate.
We instantiate this framework using a tensor-network decoder, which provides exact maximum-likelihood decoding and analytically differentiable gradients with respect to all noise parameters.
Using circuit-level data from Google's Sycamore processor, and starting from an uninformed prior, the optimization recovers noise models whose logical error rates agree to within $2\%$ with those of Google's independently characterized detector error model.
The mean squared error between the learned and reference noise parameters shows a clear overall decrease throughout training, confirming that the method recovers physically meaningful noise structure, not merely parameters that happen to decode well.
We further demonstrate that the optimization can track device drifts in real time via warm-started updates, maintaining near-optimal decoding performance under synthetically evolving noise without the need for re-characterization.
The approach is decoder-agnostic in its formulation and naturally extends to correlated noise models.
\end{abstract}

\section{Introduction}

Fault-tolerant quantum computation relies on quantum error correction (QEC) to protect logical information from physical noise~\cite{dennis2002topological}.
Recent experiments on superconducting processors have demonstrated QEC performance below the surface-code threshold~\cite{acharya2023suppressing, acharya2024quantum}, marking critical milestones toward practical fault tolerance.
For a given quantum error-correcting code, a central operation is \textit{syndrome extraction}: at each round, a set of stabilizer checks is measured, producing classical syndrome data that is fed to a decoder.
The decoder's task is to identify a likely correction that is consistent with the syndrome and preserves the logical information.

The performance of a decoder depends critically on the accuracy of the assumed underlying \textit{noise model}, a probability distribution over fault configurations that encodes the likelihood of each error mechanism.
In the detector error model (DEM) framework~\cite{gidney2021stim}, this information is captured by a set of fault-event priors associated with the edges of a decoding graph.
When these priors faithfully reflect the true hardware-device noise, decoders such as Minimum Weight Perfect Matching~\cite{higgott2023sparse}, Union-Find~\cite{delfosse2021almost}, Belief Propagation~\cite{roffe2020decoding}, and tensor-network methods~\cite{bravyi2014efficient, chubb2021statistical, piveteau2024tensor, bohdanowicz2022quantum} can achieve near-optimal logical error rates.
On real hardware, however, the priors are often only approximately known, and decoder performance degrades as the mismatch between the assumed and true noise model grows.

Characterizing the noise of a quantum hardware device is itself a substantial challenge.
Traditional tomographic protocols provide rigorous descriptions of errors but suffer from exponential resource scaling~\cite{mohseni2008quantum}.
Scalable alternatives such as randomized benchmarking~\cite{wallman2016noise}, Pauli channel estimation~\cite{flammia2020efficient, harper2020efficient, flammia2021pauli}, and sparse Pauli--Lindblad models~\cite{vandenberg2023probabilistic} have been developed, but these ``offline'' methods require dedicated characterization experiments.
In situ benchmarking approaches have recently been extended to fault-tolerant Clifford circuits~\cite{xiao2026situ}, though they still require additional experimental overhead, including dedicated benchmarking runs and extra classical post-processing steps that are separate from routine memory-experiment decoding.
Accurate noise knowledge is also central to quantum error mitigation, as reviewed in Ref.~\cite{cai2023quantum}.

A natural question is whether the syndrome data already collected during an error-correction experiment on a quantum hardware device can itself be used to learn the noise model, eliminating the need for separate characterization.
Early work by Laforest et al.~\cite{laforest2007using} already showed how experimental error-correction data can be used to infer noise-model information.
This viewpoint has been explored in a series of works by Wagner et al.~\cite{wagner2021optimal, wagner2022pauli, wagner2023learning}, who showed that under phenomenological noise models, the logical Pauli channel can be inferred from syndrome statistics alone.
Zheng et al.~\cite{zheng2025efficient} recently extended this framework to circuit-level noise, deriving necessary and sufficient conditions for learnability and providing efficient estimation protocols with provable sample-complexity guarantees.
In parallel, several groups have demonstrated that the DEM prior distribution can be estimated directly from syndrome data~\cite{blumekohout2025estimating, sivak2024optimization, iyer2025enhancing}, and adaptive strategies that re-weight decoding graphs online have shown improved decoder performance under drifting or correlated noise~\cite{nickerson2019analysing, wang2024dgr, spitz2018adaptive}.
Neural-network decoders trained end-to-end on syndrome data~\cite{torlai2017neural, bausch2024learning} represent yet another approach, though they typically bypass the noise model without providing an interpretable description of device errors.

In this work, we take a complementary approach: we treat the noise model parameters as variational degrees of freedom and optimize them to maximize decoding performance directly on experimentally obtained or simulated syndrome data.
Rather than inferring the noise channel from syndrome statistics via analytical inversion, we formulate the problem as a variational optimization in which the loss function is a differentiable surrogate of decoder performance: the binary cross-entropy between model-predicted and experimentally observed logical outcomes on a syndrome dataset.
This formulation is both decoder- and code-agnostic: any decoder that produces logical-observable probabilities can serve as the computational backbone, and our approach applies to any stabilizer code admitting a parity-check representation.

We further prove that this variational principle is information-theoretically optimal: the population minimizer of the binary cross-entropy is the true Bayes-optimal posterior decoder and thus attains the minimum achievable logical error rate.

We instantiate this framework using a circuit-level tensor-network (TN) decoder~\cite{bravyi2014efficient, piveteau2024tensor}, which provides two key advantages for noise learning.
First, TN contractions yield exact or systematically improvable approximations to the maximum-likelihood decoding problem, producing high-quality gradient signals.
Second, the entire pipeline, from syndrome input to logical-observable prediction, is analytically differentiable with respect to the noise model parameters, enabling efficient variational optimization without finite-difference approximations.

The rest of this paper is organized as follows.
In Sec.~\ref{sec:general_framework}, we set up the mathematical framework for decoding in terms of parity-check matrices, noise models, and logical observables.
In Sec.~\ref{sec:NML}, we formulate the noise-learning problem as a variational optimization and derive the relevant gradients.
In Sec.~\ref{sec:TN_decoder}, we specialize to tensor-network decoders and show how the full pipeline, decoding and noise learning, maps onto tensor-network contractions.
We demonstrate the method on realistic circuit-level noise data extracted from Google's Sycamore processor~\cite{acharya2023suppressing, acharya2023suppressing_data}, showing that competitive noise models can be learned from scratch using only syndrome data and then used online to track device drifts.

\section{General framework}\label{sec:general_framework}

Let us formalize the decoding problem starting from its core building blocks: the code, the noise model, and the logical observables.
The problem can be mathematically formulated starting from a \textit{parity-check matrix} $H$, a binary matrix over $GF(2)$ that represents either the quantum error-correcting code or a circuit-level noise scenario through a detector error model.
Given a syndrome vector $s \in GF(2)$, the linear system
\begin{equation}\label{eq:linear_boolean_system}
    He=s \pmod{2}
\end{equation}
forms one of the core building blocks, before the noise model and the logical observables are chosen.
All solutions $e$ of Eq.~\eqref{eq:linear_boolean_system} are errors compatible with the syndrome $s$.
The matrix $H$ can also be seen as the adjacency matrix of a bipartite, undirected graph often called the Tanner graph.
Rows and columns of $H$ form the two subsets of vertices of the bipartite graph.
The non-zero entries of $H$ are the edges connecting the two subsets of vertices.
For example, consider
\begin{equation}\label{eq:example_parity_check}
    H = \begin{bmatrix}
0 & 0 & 0 & 1 & 1 \\
0 & 1 & 1 & 1 & 0 \\
1 & 0 & 1 & 0 & 1
\end{bmatrix},
\end{equation}
with the corresponding graph shown in Fig.~\ref{fig:tn_mapping}a).
It is possible to define a probability distribution over all solutions of Eq.~\eqref{eq:linear_boolean_system}~\cite{mezard2009information}.
Let us define the indicator function
\begin{equation}\label{eq:indicator_function}
\begin{split}
    I_a(x_1, x_2, \ldots, x_{d_a+1}) &= 1 \text{, if } x_1 \oplus x_2 \oplus \cdots \oplus x_{d_a+1} = 0, \\
    & = 0 \text{, otherwise}.
\end{split}
\end{equation}

The (unnormalized) indicator-function representation of the solution set of Eq.~\eqref{eq:linear_boolean_system}, for the special case in Eq.~\eqref{eq:example_parity_check} reads
\begin{equation}
    P_H (e, s) = I_1(e_4, e_5, s_1) \cdot I_2(e_2, e_3, e_4, s_2) \cdot I_3(e_1, e_3, e_5, s_3),
\end{equation}
which can be generalized to arbitrary parity-check matrices as
\begin{equation}\label{eq:P_H}
    P_H (e, s) = \prod_a I_a\!\left(\{e_i\}_{i \in \partial a}, s_a\right).
\end{equation}
Given a syndrome $s'$, it is easy to see that any exact sample $e'$ from $P_H (e, s')$ satisfies $He' = s'$.

The next building block is the noise model.
As we discussed above, the noise model is nothing more than a distribution over all possible error mechanisms $P_N (e)$.
In the probabilistic language introduced in Eq.~\eqref{eq:P_H}, it is easy to combine the noise model with the code as
\begin{equation}\label{eq:code_noise}
    \text{Code + Noise} \rightarrow P_{H, N}(e, s) = P_H (e, s) P_N (e),
\end{equation}
where each error mechanism $e'$ is weighted by $P_N(e')$.
Note that $P_{H, N}(e, s)$ already contains all the necessary ingredients to solve decoding and inference tasks.
After fixing the syndrome $s$, sampling from it corresponds to searching for the most-likely error.
There are numerous decoders that use a simple ansatz for the noise model and attempt to sample Eq.~\eqref{eq:code_noise} approximately or exactly.

In order to introduce an \textit{exact} logical error rate, we need one last building block, the logical observables.
Any logical observable can be modeled as an additional row in the parity-check matrix.
It is straightforward to define an additional factor
\begin{equation}\label{eq:P_L}
    P_L(e,\ell)=\prod_{b=1}^{k} I_b\!\left(\{e_i\}_{i\in\partial b},\ell_b\right),
\end{equation}
where the index $b$ runs over the logical observables $\ell_b$.
As previously, we can combine $P_L (e, \ell)$ with the rest of the problem as
\begin{equation}\label{eq:full_distribution}
    \text{Code + Noise + Logicals} \rightarrow P(e, s, \ell) = P_H (e, s) P_N (e) P_L (e, \ell).
\end{equation}

In what follows, we restrict ourselves to a single logical observable $\ell_1$.
With that assumption, we ask for the probability that a given syndrome $s$ flips $\ell_1$.
This quantity can be expressed exactly, and it boils down to computing the marginal of $\ell_1$, that is,
\begin{equation}\label{eq:l_mariginal}
    p(s, \ell_1) = \sum_e P (e, s, \ell_1).
\end{equation}
In the maximum likelihood decoding scenario, we say that the syndrome $s$ has flipped the logical $\ell_1$ if $p(s, 1) > p(s, 0)$.
Notice that in Eq.~\eqref{eq:l_mariginal} we sum over all error mechanisms weighted by the noise model; we do not select a single sample as in the most-likely error case above.
Most-likely error decoders approximate maximum likelihood decoders with the assumption that $p(s, \ell_1) \sim P_L (e_M (s), \ell_1)$ where $e_M (s) = \text{argmax}_e P_{H, N}(e, s)$ is an error bit string with the highest probability of occurrence.

A central quantity that is often used as a proxy to judge a decoder's performance is the Logical Error Rate (LER)~\cite{dennis2002topological}.
Given $\mathcal D=\{(s_n,y_n)\}_{n=1}^{|\mathcal D|}$, let $\widehat y_n$ be one if $p(s_n,1)>p(s_n,0)$ and zero otherwise.
We define the LER as
\begin{equation}\label{eq:ler}
    \mathrm{LER} = \frac{1}{|\mathcal D|} \sum_{n=1}^{|\mathcal D|} \left(\widehat y_n-y_n\right)^2,
\end{equation}
which is the fraction of disagreements with the observed outcome.
If the $p(s, \ell_1)$'s are computed exactly, the LER reaches its minimum theoretical value, which is not necessarily zero due to statistical fluctuations.
On the other hand, when the $p(s, \ell_1)$'s are not exact, the LER typically exhibits larger values.

\section{Noise model learning}\label{sec:NML}

As we discussed above, the noise model plays a central role in the decoding problem.
For theoretical studies, one would typically assume a tractable noise model, which is modeled as an explicit functional form of $P_N(e)$.
Common choices are uncorrelated noise models $P_N (e) = \prod_i P_N (e_i)$, which model error mechanisms as independent.
That choice lends itself to efficient numerical treatments, opening the door to various algorithmic decoders such as Minimum Weight Perfect Matching and Belief Propagation.
In any realistic scenario, the extraction of an accurate noise model of the hardware device is often a daunting problem.
Error mechanisms are typically subtle and correlated, rendering their statistical impact on error correction protocols opaque.

In this section, we propose a general framework for optimizing noise models for error-correction applications.
The idea is simple: given a code and, for pedagogical purposes, one logical observable (the framework generalizes to more), we seek noise models that minimize relevant QEC figures of merit, such as the logical error rate.
Let
\begin{equation}\label{eq:log_flip_prob}
    x_s = \frac{p(s, 1)}{p(s, 1) + p(s, 0)}
\end{equation}
be the prediction of the decoder that the syndrome $s$ flipped the logical observable.
We want $x_s$ to be as close as possible to $y_s$, the experimental outcome.
We assume that the $y_s$'s are already normalized.
Notice that all quantities can be provided as soft probabilities, including the syndromes and experimental logical flips.
Furthermore, from Eq.~\eqref{eq:l_mariginal} and Eq.~\eqref{eq:full_distribution}, $x_s$ contains the full dependence on the noise model.
We can choose a noise model parameterization that allows us to carry out this task.
One of the simplest choices is $P_N(e; \theta) = \prod_i P_i(e_i; \theta_i)$ which, in the case of binary errors $e_i \in \{0, 1\}$, reduces to one scalar parameter $\theta_i$ per error $e_i$: $P_i(e_i; \theta_i) = \theta_i^{e_i} (1 - \theta_i)^{1-e_i}$.
In the remainder of this section we do not restrict $P_N(e; \theta)$ to any specific parametrization, and we carry the explicit dependence of $x_s$ on the variational parameters, $x_s \rightarrow x_s (\theta)$.

We can define a cost function, related to the LER, that can be optimized with respect to the noise model.
A common choice is the binary cross entropy
\begin{equation}\label{eq:bin_cross_ent}
    \begin{aligned}
    L(\theta)=-\sum_{n=1}^{|\mathcal D|}
    \big[&y_n\log x_{s_n}(\theta)\\
    &+(1-y_n)\log\!\left(1-x_{s_n}(\theta)\right)\big].
    \end{aligned}
\end{equation}

To analyze this loss, it is useful to make explicit the data-generating process behind $\mathcal{D}$. We assume that the dataset is composed of i.i.d.\ samples $(s_n,y_n)\sim\mathcal P^\star$ from the joint distribution induced by the device's true (unknown) noise model $P_N^\star$:
\begin{equation}\label{eq:true_data_dist}
    \mathcal{P}^\star(s,y) \;=\; \sum_e P_H(e,s)\,P_L(e,y)\,P_N^\star(e),
\end{equation}
i.e.\ the joint distribution of Eq.~\eqref{eq:full_distribution} marginalized over the (latent) error string $e$, evaluated at $P_N = P_N^\star$. The empirical loss in Eq.~\eqref{eq:bin_cross_ent} is then a Monte-Carlo estimator of the corresponding population loss, $L(\theta) \approx |\mathcal{D}| \cdot \mathcal{L}[x_{\cdot}(\theta)]$, where the dot is a placeholder for the syndrome index, so $x_{\cdot}(\theta)$ denotes the full predictor $s\mapsto x_s(\theta)$, and $\mathcal{L}$ is defined below in Eq.~\eqref{eq:population_loss}.

The following proposition makes precise the sense in which minimizing $L(\theta)$ is the ``right'' thing to do: for a fixed code and logical observable, the minimizer of the population cross-entropy induces the decoder that attains the information-theoretically lowest possible logical error rate.

\begin{proposition}[Optimality of the variational principle]\label{prop:optimality}
Fix the parity-check matrix $H$ and the logical parity-check rows $L$, and let the data be drawn from $\mathcal{P}^\star$ as in Eq.~\eqref{eq:true_data_dist}. Define the true posterior
\begin{equation}\label{eq:true_posterior}
    \eta(s) \;:=\; \frac{p^\star(s, 1)}{p^\star(s, 0) + p^\star(s, 1)},
\end{equation}
where $p^\star(s, \ell)$ is the marginal of Eq.~\eqref{eq:l_mariginal} evaluated with $P_N = P_N^\star$. For any measurable predictor $x: s \mapsto [0, 1]$, let
\begin{equation}\label{eq:population_loss}
    \mathcal{L}[x] = -\mathbb{E}_{(s,y) \sim \mathcal{P}^\star}\!\left[y\log x(s) + (1-y)\log(1-x(s))\right].
\end{equation}
Then:
\begin{enumerate}
    \item[(i)] $\mathcal{L}[x]$ is uniquely minimized, pointwise in $s$, by $x^\star(s) = \eta(s)$.
    \item[(ii)] Among all syndrome-based decoders, the Bayes rule $\widehat y^\star(s)=\mathbf{1}[\eta(s)>1/2]$ attains the information-theoretic minimum expected logical error rate
    \begin{equation}\label{eq:bayes_ler}
        \mathrm{LER}^\star = \mathbb{E}_s\!\left[ \min\{\eta(s), 1 - \eta(s)\} \right].
    \end{equation}
    \item[(iii)] If the ansatz $\{P_N(e; \theta)\}_{\theta \in \Theta}$ is expressive enough that there exists $\theta^\star$ with $x_s(\theta^\star) = \eta(s)$ for all $s$, then $\theta^\star$ is a global minimizer of $\mathcal{L}$ and the induced decoder attains $\mathrm{LER}^\star$.
\end{enumerate}
\end{proposition}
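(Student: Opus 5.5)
The plan is the standard Bayes-decision argument, organized by conditioning on the syndrome. First I would rewrite $\mathcal{P}^\star$ of Eq.~\eqref{eq:true_data_dist} as a marginal over syndromes times a conditional over $y$. Summing Eq.~\eqref{eq:true_data_dist} over $y$ gives the syndrome marginal $\mathcal{P}^\star(s)=p^\star(s,0)+p^\star(s,1)$. Here $p^\star(s,\ell)$ is exactly the marginal of Eq.~\eqref{eq:l_mariginal} at $P_N=P_N^\star$, because $P(e,s,\ell)$ in Eq.~\eqref{eq:full_distribution} coincides with the summand of Eq.~\eqref{eq:true_data_dist}. Hence, for $\mathcal{P}^\star(s)>0$, the conditional satisfies $\mathcal{P}^\star(y=1\mid s)=\eta(s)$ with $\eta$ as in Eq.~\eqref{eq:true_posterior}. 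Syndromes with $\mathcal{P}^\star(s)=0$ carry no weight and can be ignored; uniqueness in (i) is understood on the support of the syndrome marginal.

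For (i), I would use the tower property to write
\[
\mathcal{L}[x]=\mathbb{E}_s\big[\phi_{\eta(s)}(x(s))\big],
\qquad
\phi_\eta(t)=-\eta\log t-(1-\eta)\log(1-t).
\]
Since $x(s)$ may be chosen freely for each $s$, it suffices to minimize $\phi_\eta$ on $[0,1]$ for fixed $\eta$. I would then decompose
\[
\phi_\eta(t)=h(\eta)+\mathrm{KL}\big(\mathrm{Bern}(\eta)\,\|\,\mathrm{Bern}(t)\big),
\]
where $h$ is the binary entropy. Gibbs' inequality then gives $\phi_\eta(t)\ge h(\eta)$, with equality iff $t=\eta$. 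This step needs care at the boundary: the conventions $0\log 0=0$ and $\phi=+\infty$ when $t\in\{0,1\}$ mismatches $\eta$ must be fixed. As a cross-check, strict convexity of $\phi_\eta$ for $\eta\in(0,1)$ together with $\phi_\eta'(t)=(t-\eta)/(t(1-t))$ gives the same conclusion. The pointwise minimizer is therefore $\eta(s)$, and it is unique $\mathcal{P}^\star$-almost surely. I expect this boundary and measure-zero bookkeeping to be the only real obstacle, and it is mild.

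For (ii), take any decoder $\widehat y: s\mapsto\{0,1\}$; randomized decoders reduce to mixtures of deterministic ones. Conditioning on $s$ again, its expected error is
\[
\mathbb{E}_s\big[\eta(s)\,\mathbf{1}[\widehat y(s)=0]+(1-\eta(s))\,\mathbf{1}[\widehat y(s)=1]\big]\ \ge\ \mathbb{E}_s\big[\min\{\eta(s),1-\eta(s)\}\big].
\]
Equality holds exactly when $\widehat y(s)=\mathbf{1}[\eta(s)>1/2]$, with ties arbitrary. This yields Eq.~\eqref{eq:bayes_ler}, the population counterpart of Eq.~\eqref{eq:ler}.

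For (iii), the model class $\{x_\cdot(\theta)\}$ is a subset of all predictors. Part (i) therefore gives $\mathcal{L}[x_\cdot(\theta)]\ge\mathcal{L}[\eta]=\mathcal{L}[x_\cdot(\theta^\star)]$ for every $\theta$, so $\theta^\star$ is a global minimizer. Its induced rule $\mathbf{1}[x_s(\theta^\star)>1/2]$, equivalently $\mathbf{1}[p(s,1)>p(s,0)]$, is the Bayes rule, so by (ii) it attains $\mathrm{LER}^\star$. Finally, I would remark that the hypothesis is satisfied trivially when $P_N^\star$ itself lies in the ansatz. Only posterior equality is required, not $P_N(\cdot;\theta^\star)=P_N^\star$, so the result does not claim identifiability of the noise parameters.
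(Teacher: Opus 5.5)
Your proposal is correct and follows the paper's proof sketch. You condition on $s$ and minimize the conditional cross-entropy pointwise (via $\phi_\eta(t)=h(\eta)+\mathrm{KL}$ where the paper invokes strict convexity), bound the conditional 0--1 risk by $\min\{\eta(s),1-\eta(s)\}$, and compose the two for (iii), while your extra bookkeeping (verifying $\mathcal{P}^\star(y=1\mid s)=\eta(s)$, restricting uniqueness to the syndrome support, and covering $\eta(s)\in\{0,1\}$, which the paper's open-interval convexity argument skips) tightens the argument without changing it.
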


\noindent\textit{Proof sketch.} For any fixed $s$, the integrand in Eq.~\eqref{eq:population_loss} reduces to $-q \log x - (1-q)\log(1-x)$ with $q = \mathbb{E}[y \mid s] = \eta(s)$, which is strictly convex in $x \in (0,1)$ and uniquely minimized at $x=q$. This proves (i). Part (ii) is the classical Bayes-rule bound: Eq.~\eqref{eq:ler} is the empirical $0$--$1$ risk of the hard classifier $\widehat y$, its expectation under $\mathcal P^\star$ is the population risk $\mathbb E_s[\,\mathbb P(\widehat y(s)\neq y\mid s)\,]$, and thresholding $\eta(s)$ at $1/2$ minimizes the latter pointwise, giving Eq.~\eqref{eq:bayes_ler}. Part (iii) follows by composing (i) and (ii). \hfill$\square$

\textit{Remark (identifiability up to observational equivalence).} The map $P_N^\star \mapsto\mathcal P^\star$ is many-to-one. For $\phi:e\mapsto(He,Le)$, the silent fault combinations form
\begin{equation}\label{eq:silent_kernel}
    K \;=\; \{e \in \mathbb{F}_2^N \,:\, He = 0 \text{ and } Le = 0\},
\end{equation}
whose elements produce no syndrome or selected logical flip, that is, stabilizers of the code.
Two noise models yield the same joint distribution precisely when their total probability agrees on every coset of $K$.
This non-identifiability is well documented in the syndrome-based noise-learning literature~\cite{wagner2021optimal, wagner2022pauli, wagner2023learning, zheng2025efficient, iyer2025enhancing}.
Crucially, the loss identifies only $\eta(s)=\mathcal P^\star(\ell=1\mid s)$, not $\mathcal P^\star(s)$. Thus Proposition~\ref{prop:optimality} establishes optimal decoding, identifying $\theta^\star$ only up to this equivalence class.

\textit{Remark (choice of loss).} Because the binary cross-entropy is a strictly proper scoring rule, its population minimizer coincides with that of any other strictly proper scoring rule, such as the squared loss $\sum_s (x_s - y_s)^2$; switching between them does not change the optimum.
The binary cross-entropy is strictly convex as a function of the predicted probability $x$, but the optimization variable here is $\theta$; because $x_s(\theta)$ is a nonlinear rational function induced by decoder marginalization, $L(\theta)$ is generally non-convex.

In the remainder of this section we derive the gradient of $L(\theta)$, so that the variational problem of Proposition~\ref{prop:optimality} can be attacked by gradient descent whenever $P_N(e; \theta)$ is differentiable in $\theta$.
When the decoder itself is also differentiable, the gradient can be evaluated exactly; otherwise, finite-difference or derivative-free methods such as Nelder-Mead remain viable.

We stress that the optimal noise model that results from this procedure does not need to provide a faithful representation of all error mechanisms in the device.
It is completely biased towards the specific code, the logical observable, and the choice of the decoder.
When the primary goal is quantum error correction rather than device tomography, this bias focuses optimization on task-relevant noise structure.
The learned parameters are task-, ansatz-, and decoder-dependent, so physical interpretation requires independent validation.

Analytical gradients are important in practice because they can substantially reduce optimization cost compared with finite-difference or derivative-free alternatives.
Let us assume a parametrization of the noise model $P_N (e; \theta)$ that depends on a set of learnable variational parameters $\theta = \{\theta_1, \theta_2, \ldots, \theta_N\}$.
The gradient of $L(\theta)$ in Eq.~\eqref{eq:bin_cross_ent} can be expressed as
\begin{equation}\label{eq:grad_cross_ent}
    \partial_{\theta_i} L(\theta) = - \sum_{s \in \mathcal{D}} \frac{y_s - x_s (\theta)}{x_s (\theta) (1- x_s (\theta))} \partial_{\theta_i} x_s (\theta),
\end{equation}
where, from Eq.~\eqref{eq:log_flip_prob},
\begin{equation}\label{eq:grad_prediction}
    \partial_{\theta_i} x_s (\theta) = \frac{p(s, 0) \partial_{\theta_i} p(s, 1) - p(s, 1) \partial_{\theta_i} p(s, 0)}{(p(s, 1) + p(s, 0))^2},
\end{equation}
and
\begin{equation}\label{eq:grad_partition}
    \partial_{\theta_i} p(s, \ell, \theta) = \sum_e P_H (e, s)P_L (e, \ell) \partial_{\theta_i} P_N (e, \theta)
\end{equation}
from Eq.~\eqref{eq:full_distribution} and \eqref{eq:l_mariginal}.
In Eq.~\eqref{eq:grad_partition}, we made the dependence of $p(s, \ell)$ on the noise model's variational parameters explicit.
Therefore, to express $\partial_{\theta_i} L(\theta)$ analytically, we just need to know $\partial_{\theta_i} P_N (e, \theta)$.
Given an explicit form of $\partial_{\theta_i} P_N(e, \theta)$, the gradients of $L(\theta)$ follow directly from the chain rule above.

The objective is not intrinsically tied to tensor-network decoding. In principle, an approximate decoder could support noise learning if it provides noise-dependent estimates of the logical-class probabilities $p(s,\ell;\theta)$ together with derivatives or a separate gradient estimator. For a differentiable approximate decoder, differentiation yields gradients of its approximate probabilities; although these need not coincide with the exact-model gradients, they may still provide a useful optimization signal. A sampling decoder could instead estimate logical-class probabilities by assigning each sampled error string to a logical class and averaging the resulting outcomes. Hard-decision output alone is insufficient without such an augmentation. These possibilities suggest a route to trading accuracy for computational cost, but we do not test them here.

As a concrete example, consider the uncorrelated noise model $P_N(e; \theta) = \prod_i P_i(e_i; \theta_i) = \prod_i \theta_i^{e_i} (1 - \theta_i)^{1 - e_i}$.
Since the factors are independent, the derivative with respect to a single parameter $\theta_j$ takes the simple form $\partial_{\theta_j} P_N(e; \theta) = (2e_j - 1) \prod_{i \neq j} P_N(e_i; \theta)$,
which, substituted into Eq.~\eqref{eq:grad_partition}, yields
\begin{equation}\label{eq:grad_uncorrelated}
    \partial_{\theta_j} p(s, \ell; \theta) = p(s, \ell; \theta, \theta_j=1) - p(s, \ell; \theta, \theta_j=0).
\end{equation}
Equation~\eqref{eq:grad_uncorrelated} reduces each derivative required for noise learning to a difference between two logical-class probabilities evaluated with the $j$th noise parameter clamped to zero and one. Many scalable approximate decoders, including matching-based and belief-propagation methods, already consume noise-dependent weights. Standard hard-decision implementations do not directly provide the clamped logical-class probabilities on the right-hand side of Eq.~\eqref{eq:grad_uncorrelated}; however, augmenting such decoders with probabilistic or sampling-based estimators could provide approximate gradients. Investigating this approach is an interesting direction for future work and could enable noise learning at code distances beyond the reach of exact contraction. This extension is not demonstrated here.

\section{Tensor-network decoder}\label{sec:TN_decoder}

\begin{figure}[t!]
    \centering
    \includegraphics[width=1.0\linewidth]{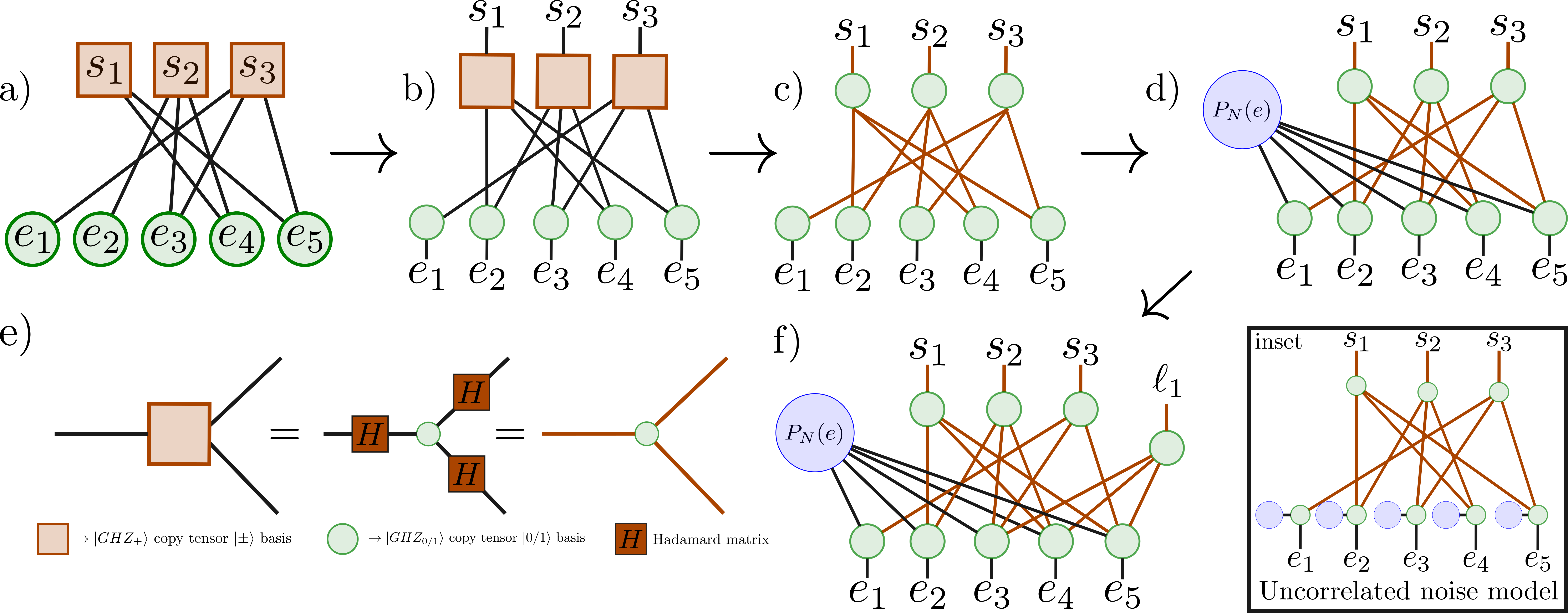}
    \caption{Mapping from the Tanner graph to a tensor-network decoder.
    \textbf{a)}~The Tanner graph with syndrome nodes $s_1, s_2, s_3$ (orange squares) and error nodes $e_1, \ldots, e_5$ (green circles).
    \textbf{b)}~Each error node becomes a $|GHZ_{0/1}\rangle$ copy tensor (green) and each syndrome node a $|GHZ_{\pm}\rangle$ copy tensor (orange).
    \textbf{c)}~Inserting Hadamard matrices $H$ on the bonds between syndrome and error tensors converts all copy tensors to the same ($|0/1\rangle$) basis, as explained in~\textbf{e)}.
    \textbf{d)}~A noise model $P_N(e)$ is attached to the error legs, adding a new tensor that encodes the fault-event probabilities.
    \textbf{e)}~Identity: a $|GHZ_{\pm}\rangle$ copy tensor equals a $|GHZ_{0/1}\rangle$ copy tensor dressed with Hadamard matrices on each leg.
    \textbf{f)}~Full decoder network including a logical observable $\ell_1$ as an additional check node. This panel also illustrates the special case of an uncorrelated noise model, where $P_N(e)$ factorizes into independent single-variable tensors.
    All tensor-network objects in this figure, copy tensors, Hadamard gates, noise tensors, and their contractions, can be straightforwardly constructed and manipulated using the \texttt{quimb} library~\cite{gray2018quimb} and NVIDIA's CUDA-Q package~\cite{kim2023cudaq}.}
    \label{fig:tn_mapping}
\end{figure}

In the remainder of this paper, we specialize to tensor-network (TN) decoders, which allow us to compute $\partial_{\theta_i} L(\theta)$ exactly.
Exact contractions are generally tractable only for small-to-moderate problem sizes. However, exactness provides high-quality gradient signals and a useful reference for approximate methods; this is the key trade-off between accuracy and computational cost.
To map the problem from the previous sections onto a tensor network, note that both $P_H$ and $P_L$ are expressed as a product of \textit{local} terms in Eq.~\eqref{eq:P_H} and Eq.~\eqref{eq:P_L}, respectively, where ``local'' means each factor depends on a small subset of variables determined by the Tanner-graph connectivity.
Functions that can be expressed as products of local terms are sometimes called factor graphs or graphical models.
Thus, $P(e, s, \ell)$ in Eq.~\eqref{eq:full_distribution} is a factor graph as long as the noise model $P_N$ can be factorized.
These include popular noise models such as uncorrelated, Gaussian, locally correlated, and tensor-network models.
A factor graph supported on a discrete domain (in what follows, binary values) admits a straightforward mapping to a tensor network.
The mapping works in two steps.
First, for each error variable $e_i$ we define a copy tensor
\begin{equation}
\begin{split}
    t_{ijk} &= 1, \text{ if } i = j = k \\
    &= 0, \text{ otherwise}
\end{split}
\end{equation}
with as many indices as the degree of the vertex corresponding to $e_i$ in the Tanner graph.
The indices carry discrete states, while the tensor entries encode compatibility weights (and, once multiplied by noise factors, probabilities).
Second, we map each factor $I_a (e_i^{s_a}, e_j^{s_a}, \ldots, e_k^{s_a}, s_a)$ in Eq.~\eqref{eq:P_H} to a tensor.
Since $I_a (\cdot)$ is supported on a discrete domain, it can already be seen as a tensor with indices $\{ e_i^{s_a}, e_j^{s_a}, \ldots, e_k^{s_a}, s_a \}$ and entries defined as in Eq.~\eqref{eq:indicator_function}.
In our particular scenario, $I_a(\cdot)$ computes the parity of its argument.
It is easy to show that it corresponds to a copy $| GHZ_{\pm} \rangle \propto |+\rangle^{\otimes 4} + |-\rangle^{\otimes 4}$ tensor in the $| \pm \rangle \propto |0\rangle \pm |1\rangle$ basis
\begin{equation}
    I(e_i, e_j, e_k, s_a) \propto \langle e_i, e_j, e_k, s_a |GHZ_{\pm} \rangle \propto \langle e_i, e_j, e_k, s_a | H^{\otimes 4} | GHZ_{0/1} \rangle,
\end{equation}
where in the second equality we used the Hadamard matrix $H$ and the copy $| GHZ_{0/1} \rangle \propto |0\rangle^{\otimes 4} + |1\rangle^{\otimes 4}$ tensor in the $| 0/1 \rangle$ basis.
In other words, the tensor network that represents $P_H (e, s)$ is made up of two sets of copy tensors.
One copy tensor on the basis $| 0/1 \rangle$ for each error variable and one copy tensor on the basis $| \pm \rangle$ for each syndrome variable.

In Fig.~\ref{fig:tn_mapping}a) we reproduce the Tanner graph from the previous section.
Fig.~\ref{fig:tn_mapping}b) shows how to map it onto a tensor network with copy tensors in the $| 0/1 \rangle$ (green circles) and $| \pm \rangle$ (orange squares).
In Fig.~\ref{fig:tn_mapping}c) we use Hadamard matrices to simplify the tensor network.
By inserting a Hadamard matrix at each bond (see orange edges and Fig.~\ref{fig:tn_mapping}e)), we reduce the problem to just one type of copy tensor.
Notice that this is computationally desirable, since copy tensors can be stored lazily as indices without allocating any memory.

We now attach the noise model.
In Fig.~\ref{fig:tn_mapping}d) we connect a generic noise model to the original tensor network by adding an index to the copy tensors corresponding to the error variables.
Any classical multivariate distribution over error bit-strings that admits a compact tensor network representation can be employed.
These include uncorrelated noise models as in the previous section (see panel \textbf{d} in Fig.~\ref{fig:tn_mapping}), Matrix Product States (MPS)~\cite{white1992density, schollwock2011density}, Projected Entangled Pair States (PEPS)~\cite{verstraete2004renormalization}, and more general tensor networks~\cite{gray2021hyper, gray2024hyperoptimized}.
Stronger and non-local correlations in the noise model will result in higher contraction costs.

As we commented on in Eq.~\eqref{eq:code_noise}, in order to implement an \textit{exact} most-likely correction decoder, we can fix the dangling $s_i$ legs of the tensor network in Fig.~\ref{fig:tn_mapping}d) with a given syndrome and contract or sample the compatible error configurations and aggregate their probabilities by correction class.
The most-likely correction decoder will be exact when the class probabilities are contracted exactly.
Selecting the single most-likely error is generally only an approximation to selecting the most-likely correction, because many individually less likely errors can belong to the same correction class and collectively carry greater probability. Tensor networks are well suited to the latter task because the class constraint can be represented by an additional parity or correction leg and all compatible errors can be summed in one contraction.

\begin{figure}[t!]
    \centering
    \includegraphics[width=0.5\linewidth]{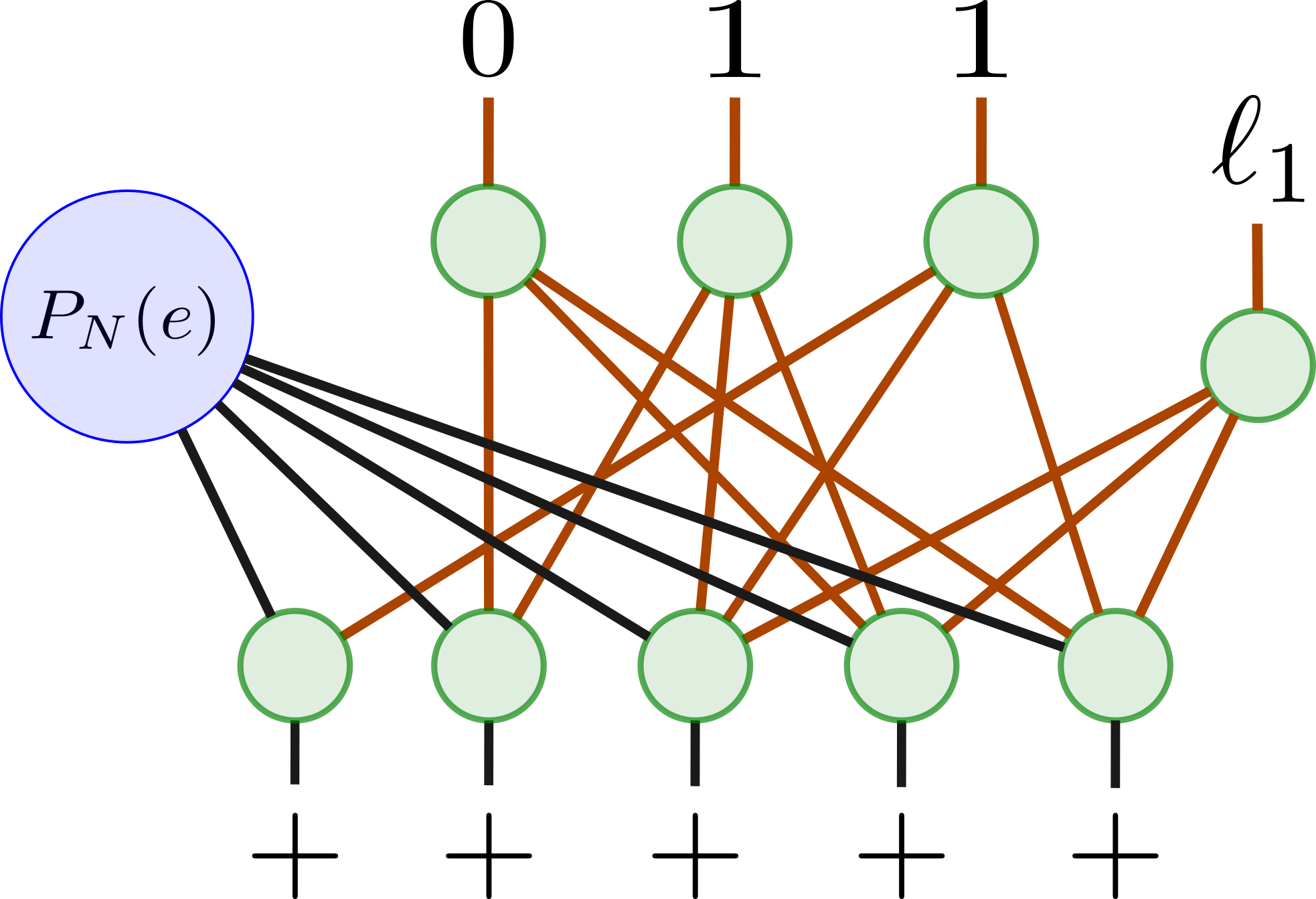}
    \caption{Maximum-likelihood decoding as a tensor-network contraction. The syndrome legs are fixed to $(s_1, s_2, s_3) = (0,1,1)$, each error variable is contracted with a $|+\rangle$ state (crosses at the bottom) to sum over all error configurations, and the noise model $P_N(e)$ is attached on the left. The logical observable $\ell_1$ remains as a free dangling leg (top right), so that the contraction yields the two-component vector $p(s, \ell_1)$.}
    \label{fig:mld}
\end{figure}

On the other hand, in order to construct an exact Maximum-Likelihood decoder, we need at least one logical observable.
As we discussed in Eq.~\eqref{eq:P_L}, any logical observable can be modeled as an additional check variable in the parity-check matrix.
It is therefore straightforward to add an observable $\ell_1$ to the network, as shown in Fig.~\ref{fig:tn_mapping}f).
Now we have all the ingredients to compute the likelihood that a given syndrome flips an observable.
Since we want to sum over all possible error configurations, we contract each error leg with a $|+\rangle$ state.
The dangling $s_i$ legs are instead fixed to a given syndrome configuration, for example $(s_1, s_2, s_3) = (0,1,1)$.
This procedure is illustrated in Fig.~\ref{fig:mld}, which shows the full tensor network ready for contraction: the syndrome legs are clamped to $(0,1,1)$, each error variable is projected onto $|+\rangle$ to perform the sum over all error configurations, and the logical observable $\ell_1$ remains as a free leg.
The contraction will result in a two-dimensional vector that corresponds exactly to $p(s, \ell_1)$ in Eq.~\eqref{eq:l_mariginal} with $s = (0,1,1)$.
From $p(s, \ell_1)$ we can finally compute $x_s$ in Eq.~\eqref{eq:log_flip_prob}.

\subsection{Learning noise models without prior knowledge}

We now apply the noise-learning framework of Sec.~\ref{sec:NML} using the tensor-network decoder described above.
Since both the likelihood $x_s$ and the noise model $P_N(e;\theta)$ admit tensor-network representations, the gradient $\partial_{\theta} L(\theta)$ can be computed analytically.

We initialize all parameters to the same value, without mechanism-specific prior information about the error rates.
The only input to the optimization is a dataset of syndromes and logical observable flips, which can be collected from a standard memory experiment, or synthetically generated.
From this initialization, the optimization finds a model that matches models obtained through detailed device characterization.
This setting is particularly relevant for scenarios where explicit noise tomography is impractical, such as when the device is not fully accessible or when the noise landscape changes between calibration cycles.
For a different fault-tolerant gadget, the problem must be formulated using the appropriate figure of merit. For example, a logical Clifford gate can be implemented as a Clifford circuit, decoded from its detector outcomes, and optimized using the logical gate fidelity as a proxy.

To demonstrate how this works for a memory experiment, we study a distance-3 surface code with 3 rounds of syndrome extraction.
The syndrome data is sampled from the detector error models (DEMs) extracted from Google's Sycamore processor~\cite{acharya2023suppressing}, which provides an experimentally realistic benchmark for the noise-learning procedure.
Thus, the benchmark uses synthetic syndrome and logical-observable samples generated from an experimentally characterized hardware DEM, rather than shots collected directly from a live device.
We use an uncorrelated noise model ansatz $P_N(e; \theta) = \prod_i P_i(e_i; \theta_i)$ and optimize the binary cross entropy in Eq.~\eqref{eq:bin_cross_ent} via stochastic gradient descent.
At each of $500$ optimization iterations, the gradients are estimated using a batch of $30{,}000$ syndrome shots.
The MSE is averaged over fault-event probabilities. LER curves show the mean and one standard deviation over 50 fresh evaluation batches.
As a baseline, we compare our learned noise model against the DEM provided by Google~\cite{acharya2023suppressing_data}, which was obtained through an independent and detailed characterization of the device.

\begin{figure}[t!]
    \centering
    \includegraphics[width=\textwidth]{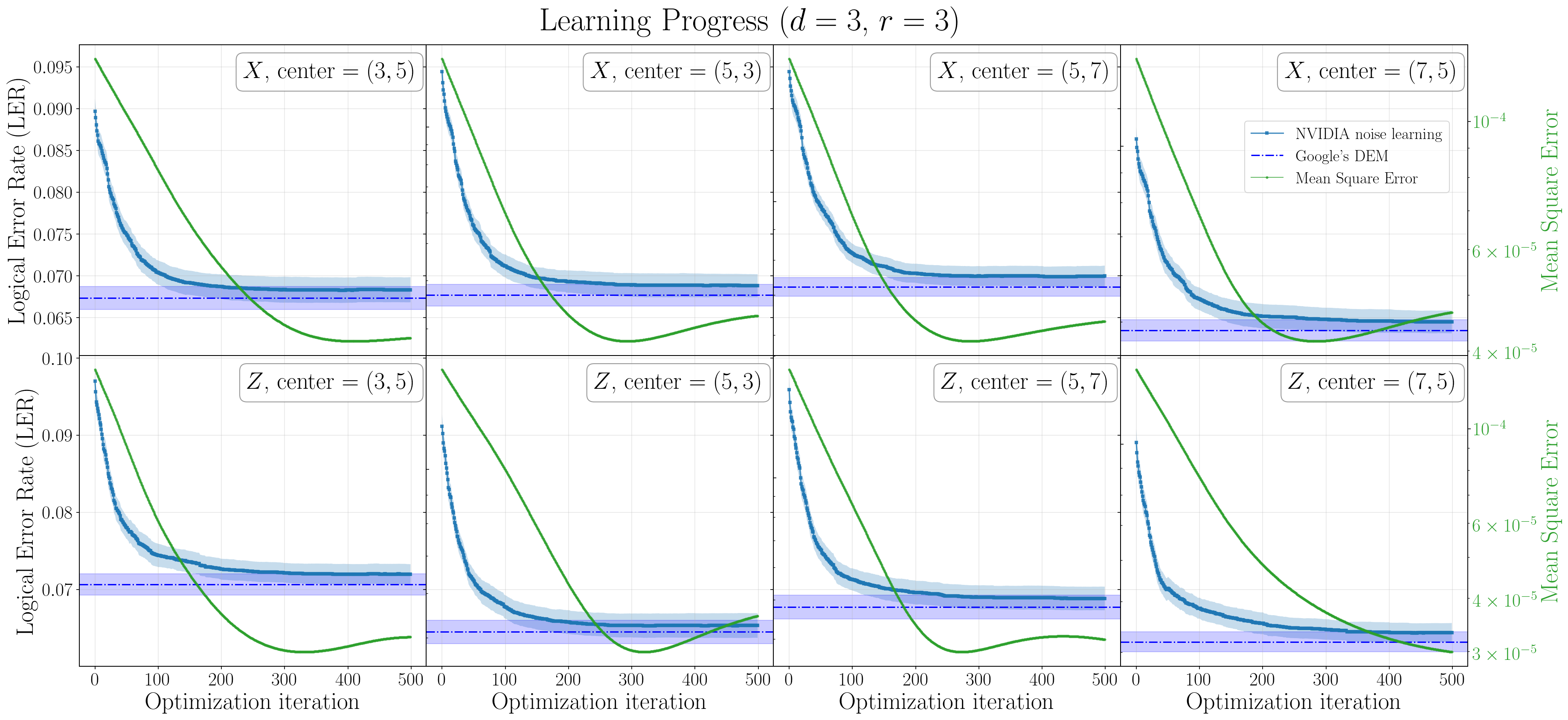}
    \caption{Learning progress for a distance-3 surface code with 3 rounds of syndrome extraction. Top row: X logical observable. Bottom row: Z logical observable. Each panel corresponds to a different center qubit position on the device. The logical error rate (LER, left axis) decreases as a function of the optimization iteration, converging towards the baseline set by Google's detector error model (DEM). The mean squared error (MSE, right axis, green) between the learned noise parameters $\theta$ and the DEM parameters is shown on a logarithmic scale. The MSE follows a clear downward trend throughout the training, indicating that the optimization not only improves decoding performance but also recovers noise parameters that are close to those obtained from explicit device characterization.}
    \label{fig:ler_comparison}
\end{figure}

The learning progress is shown in Fig.~\ref{fig:ler_comparison} for both the X and Z logical observables across four different regions of the device.
Starting from a uniform initialization of the noise model parameters, the logical error rate (LER) drops rapidly during the first $\sim 100$ iterations and then gradually converges towards the baseline set by Google's DEM.
After $500$ iterations, the learned noise model reaches a LER that is within the statistical uncertainty of the reference value across all configurations, using only synthetic syndrome and logical-observable samples generated from the experimentally characterized hardware DEM, without direct access to the hardware or additional explicit noise tomography.
The shaded regions in the figure reflect the variance across different syndrome batches and confirm that the optimization remains stable throughout the training.
In addition to tracking the LER, we monitor the mean squared error (MSE) between the learned parameters $\theta$ and the reference DEM parameters (right axis, green curves in Fig.~\ref{fig:ler_comparison}).
The MSE shows a clear downward trend on a logarithmic scale, supporting (though not by itself proving) that the optimization is not merely finding parameters that happen to decode well, but is genuinely recovering the underlying noise structure of the device.
This provides additional evidence that the learned noise model captures physically meaningful information about the error processes, beyond what is needed for optimal decoding alone.
These results test decoding on memory experiments drawn from the same family of characterized DEMs used for training. They therefore do not by themselves demonstrate generalization to more complex logical operations. Generalization will depend on whether the learned parameters represent physical mechanisms shared by the memory and operation circuits; operation-specific mechanisms would require additional labeled data and model parameters.
We report distance-3 results because the cost of the exact tensor-network contractions used to obtain reference-quality gradients grows exponentially with graph treewidth, while the number and spacetime connectivity of detector events also increase with code distance and rounds. A distance-5 memory circuit is therefore substantially more expensive than the distance-3 benchmark and was outside the exact-contraction study presented here. The variational objective itself is not restricted to distance 3: approximate tensor-network contraction or a scalable decoder such as matching, Union-Find, belief propagation, or a suitably conditioned neural network can provide the forward estimates needed at higher distance, as discussed in Sec.~\ref{sec:NML} and below.

\subsection{Online learning of device drifts}

\begin{figure}[t!]
    \centering
    \includegraphics[width=\textwidth]{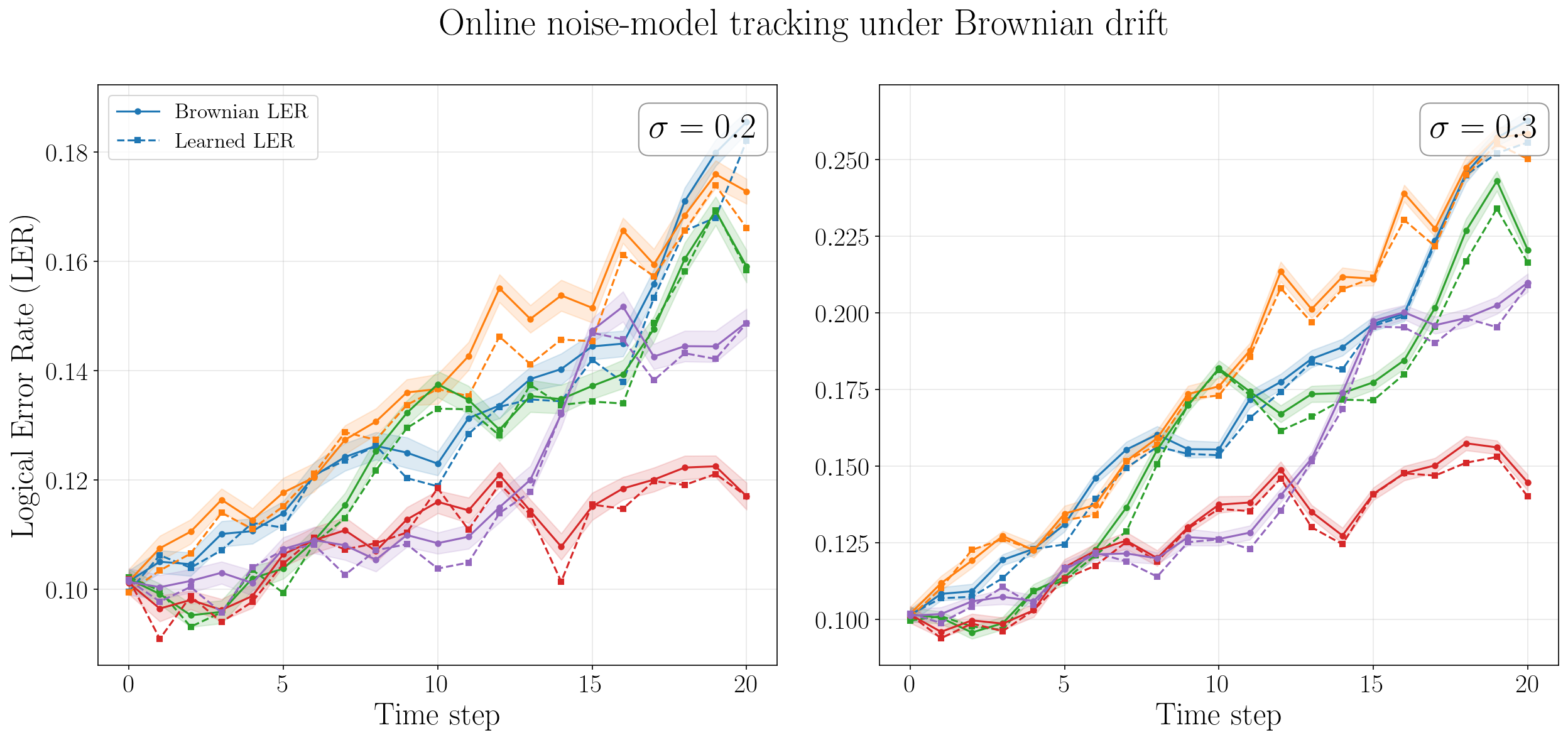}
    \caption{Online noise-model tracking under synthetic device drift for a distance-3 surface code with 3 rounds. The noise model evolves over 20 time steps along a Brownian motion in logit space (Eq.~\eqref{eq:brownian_logit}) with drift amplitude $\sigma = 0.2$ (left) and $\sigma = 0.3$ (right). Five independent random seeds are shown, each as a different color. For each seed, solid lines with circles (``Brownian LER'') show the LER of a decoder with perfect knowledge of the true drifted noise model, while dashed lines with squares (``Learned LER'') show the LER of the online-learned model, which is warm-started from the previous time step. The learned model tracks the true baseline closely across all seeds, even under the larger drift of $\sigma = 0.3$.}
    \label{fig:brownian_tracking}
\end{figure}

In practice, the noise characteristics of a quantum device are not static.
Error rates drift over time due to fluctuations in control electronics, defects, imperfections, and other environmental factors.
A noise model that was accurate at calibration time can become stale within minutes to hours, degrading decoder performance.
This motivates an \textit{online} learning strategy in which the noise model is continuously updated as fresh syndrome data becomes available, without the need to restart the optimization from scratch.
A related approach using reinforcement learning to adapt quantum control parameters in real time has recently been demonstrated by Sivak et al.~\cite{sivak2025reinforcement}, where error detection events serve as learning signals to stabilize surface code performance under drifting noise.

To study this scenario in a controlled setting, we synthetically model device drift by evolving the error probabilities of the detector error model along a Brownian motion in logit space.
Let $\theta^{(0)} = \{\theta_1^{(0)}, \ldots, \theta_N^{(0)}\}$ denote the initial error probabilities extracted from the device characterization.
We define a time-dependent noise model $\theta^{(t)}$ through the following stochastic process.
First, we map each probability to its logit,
\begin{equation}\label{eq:logit}
    \lambda_i^{(0)} = \log \frac{\theta_i^{(0)}}{1 - \theta_i^{(0)}}.
\end{equation}
The logit-space parameters then evolve according to a discrete Brownian motion,
\begin{equation}\label{eq:brownian_logit}
    \lambda_i^{(t)} = \lambda_i^{(0)} + \sigma \sum_{\tau=1}^{t} \xi_i^{(\tau)}, \qquad \xi_i^{(\tau)} \sim \mathcal{N}(0, 1),
\end{equation}
where $\sigma$ controls the amplitude of the drift and $\xi_i^{(\tau)}$ are independent standard normal increments.
Finally, the physical error probabilities at time $t$ are recovered via the sigmoid function,
\begin{equation}\label{eq:sigmoid_recovery}
    \theta_i^{(t)} = \frac{1}{1 + e^{-\lambda_i^{(t)}}}.
\end{equation}
Operating in logit space ensures that the error probabilities remain in $(0,1)$ throughout the evolution.
As the drift amplitude $\sigma$ increases, the noise parameters explore a wider region of the hypercube $[0,1]^N$, modeling more severe device instabilities.

At each time step $t$, new datasets $\mathcal{D}^{(t)}$ of syndromes and logical observable flips are sampled from the new detector error model at time $t$, simulating the performance degradation of real devices.
Rather than reinitializing the variational parameters, the optimization continues from the parameters learned at the previous time step.
This warm-starting strategy exploits the temporal continuity of the drift: since the noise model changes only incrementally between consecutive steps, the previously learned parameters provide a good initial condition, and only a small number of gradient updates are needed to track the new noise landscape.

In Fig.~\ref{fig:brownian_tracking} we demonstrate this online learning procedure for a memory experiment on a distance-3 surface code with 3 rounds.
The detector error model is evolved over $20$ time steps along a Brownian path with two drift amplitudes, $\sigma = 0.2$ (left) and $\sigma = 0.3$ (right).
To assess robustness, the experiment is repeated across five independent random seeds, each shown as a separate color.
At each step, the decoder is re-optimized using a fresh batch of syndrome data sampled from the current detector error model.
Solid lines with circles show the logical error rate obtained by a decoder that has perfect knowledge of the true (drifted) noise model at each time step, serving as the best achievable baseline.
Dashed lines with squares show the LER of the online-learned model.
Across all seeds and both drift-amplitude settings, the learned noise model tracks the drifting baseline closely, with the two curves remaining in close agreement at nearly every time step.
As expected, larger drift amplitude ($\sigma = 0.3$) leads to more pronounced fluctuations in the LER, yet the online-learned model continues to follow the baseline faithfully.
This result confirms that the warm-starting strategy is effective: the optimization requires only a modest number of gradient updates per step to adapt to the incremental changes in the noise landscape.

The ability to track device drifts online has important practical implications.
It effectively recalibrates the noise model, which would otherwise require dedicated characterization experiments.
Instead, the decoder continuously refines its noise model using the same syndrome data that is already collected during normal operation, making the approach compatible with real-time decoding and calibration pipelines. The learned drift can flag when discrete recalibration is needed or support continuous calibration while the device operates.

\section{Conclusion}

We have presented a variational framework for learning the noise model of a quantum error-correcting code directly from syndrome and logical observable data collected during hardware memory experiments.
The method formulates noise characterization as a variational optimization problem: the fault-event probabilities are treated as variational parameters and updated to minimize the binary cross-entropy between predicted and observed logical flips.
The objective is principled rather than ad hoc: we prove (Proposition~\ref{prop:optimality}) that its population minimizer coincides with the Bayes-optimal posterior decoder, so that any sufficiently expressive ansatz attains the information-theoretic minimum logical error rate, up to the well-known stabilizer-coset equivalence class.
We derived the required gradients analytically within the tensor-network decoding formalism and demonstrated the approach on circuit-level noise data extracted from Google's Sycamore processor.
The main results are:
\begin{itemize}
    \item The variational objective is provably the right thing to optimize: its population minimizer is the Bayes-optimal decoder and saturates the information-theoretic LER lower bound (Proposition~\ref{prop:optimality}, Sec.~\ref{sec:NML}).
    \item Starting from a completely uninformed initialization, the optimization recovers a noise model whose logical error rate matches that of Google's independently characterized detector error model, using only syndrome data as input (Sec.~\ref{sec:TN_decoder}). Beyond matching LERs, the mean square error between learned and reference parameters decreases throughout training, indicating that the recovered noise structure is physically meaningful and not merely an LER-equivalent surrogate.
    \item Under synthetic device drift modeled by Brownian motion in logit space, the learned noise model tracks the evolving noise landscape in real time via warm-started optimization, maintaining decoding performance comparable to a decoder with perfect knowledge of the instantaneous noise.
\end{itemize}

The method has several notable strengths.
Because the tensor-network decoder is exact (or systematically improvable), the resulting noise estimates inherit this precision: the optimizer receives high-quality gradient signals that reflect the true sensitivity of the logical error rate to each noise parameter.
The framework naturally accommodates correlated noise (Fig.~\ref{fig:tn_mapping}d), since the noise model ansatz can be extended beyond independent error channels to include multi-qubit correlations without modifying the optimization procedure.
The learned output is an explicit, interpretable noise model in the form of a tensor network that can be inspected, transferred to other decoders, or used for diagnostic purposes.
The optimization can also use a neural-network decoder as its computational backbone, provided that the network outputs probabilities conditioned on the noise parameters or can be re-evaluated as those parameters change. A conventional end-to-end neural decoder that does not expose any dependence on $P_N(e;\theta)$ cannot by itself supply gradients for learning $\theta$; it must be augmented, conditioned, or retrained within the optimization loop.
The online learning capability further allows the model to adapt continuously to device drifts without interrupting the computation.
Finally, the mathematical transparency of the tensor-network formulation ensures that every step --- from syndrome processing to gradient computation --- is analytically understood and free of black-box components.

From a practical standpoint, the approach can substantially simplify noise characterization in experimental settings.
The only data required are syndrome measurements and logical observable outcomes, both of which are already collected during standard QEC memory experiments.
No dedicated offline characterization protocols are needed: the decoder learns the noise model in situ, as a byproduct of its normal operation.

The main open questions concern scalability and extension beyond memory-style logical-observable prediction.
Exact tensor-network contraction scales exponentially with the treewidth of the underlying factor graph, which grows with code distance and the number of syndrome extraction rounds.
For moderate code sizes considered here, the contraction is tractable, but extending to distance-5 and higher-distance codes will require approximate contraction strategies.
Three complementary directions are promising.
First, compositional approaches can decompose the full decoding problem into smaller, independently contractible sub-instances --- for example, by partitioning the spacetime decoding graph along spatial or temporal boundaries and combining the results.
Second, more advanced tensor-network contraction techniques, such as belief propagation~\cite{alkabetz2021tensor, tindall2023gauging, pancotti2023one, evenbly2024loop} on the factor graph or approximate message-passing schemes, can replace exact contraction while preserving differentiability and thus the ability to compute gradients for noise learning.
Third, as emphasized in Sec.~\ref{sec:NML}, the analytical gradient expressions in Eqs.~\eqref{eq:grad_cross_ent}--\eqref{eq:grad_partition} are decoder-agnostic: any decoder that produces an estimate of the logical-observable probability $x_s(\theta)$, or the analogous probability of a correction class, can be used to approximate the gradients, even if the decoder itself is not a tensor network.
This opens the possibility of using more scalable decoders --- such as Minimum Weight Perfect Matching, Union-Find, or neural-network decoders --- as the forward pass for gradient estimation, while retaining the noise model parameterization and optimization framework developed here.
The resulting gradients would be approximate, but should still be sufficient to drive the noise model towards improved decoding performance at code distances well beyond the reach of exact tensor-network contraction.
Combining these strategies with hardware-aware parallelism, and integrating the optimizer directly into real-time decoding pipelines, could extend the reach of the method to the code distances required for practical fault-tolerant quantum computation.

\section*{Acknowledgments}
We thank Taylor Patti, Ben Howe, Justin Lietz, and Fernando Pastawski for helpful discussions and comments on the manuscript.

\bibliographystyle{plain}
\bibliography{bibliography}

\end{document}